\documentclass{llncs}
\usepackage{color}
\usepackage[T1]{fontenc}
\usepackage[utf8]{inputenc}
\usepackage{amsmath,amssymb} 
\usepackage{hyperref}
\usepackage{bbm}
\usepackage{ifthen}
\newcommand{\dfn}[1]{\emph{#1}}
\newcommand\subfini{\Subset}
\usepackage{stmaryrd}

\newcommand{\Z}{\mathbbm Z}
\newcommand{\N}{\mathbbm N}
\newcommand{\M}{\mathbbm M}

\newcommand{\A}{\mathcal A}
\newcommand{\B}{\mathcal B}
\newcommand{\G}{\mathcal G}
\newcommand{\forb}{\mathcal F}
\newcommand{\lang}{\mathcal L}
\newcommand{\supp}[1]{\mathcal S(#1)}
\newcommand{\Aut}[1]{\mathcal Aut(#1)}
\newcommand{\End}[1]{\mathcal End(#1)}
\newcommand{\WP}[1]{\mathcal{WP}(#1)}

\newcommand{\ie}{\textit{i.e.}\ }
\newcommand{\compl}[1]{{#1}^C}
\DeclareMathOperator*{\id}{id}
\newcommand\ito\rightarrowtail
\newcommand\oto\twoheadrightarrow

\newcommand{\defeq}{:=}
\newcommand\prenum{\preceq_e}
\newcommand\eqenum{\equiv_e}

\newcommand{\sett}[2]{\left\{\left.#1\vphantom{#2}\right|#2\right\}}
\newcommand{\set}[3]{\sett{#1\in#2}{#3}}
\newcommand\card[1]{\left|#1\right|}
\title{Undecidable word problem in subshift automorphism groups}
\author{Pierre Guillon\inst1 \and Emmanuel Jeandel\inst2 \and Jarkko Kari\inst3 \and Pascal Vanier\inst4}
\institute{Universit\'e d'Aix-Marseille, CNRS, Centrale Marseille\\  I2M, UMR 7373 -- 13453 Marseille, France
	\\\email{pierre.guillon@math.cnrs.fr}
	\and Université de Lorraine, CNRS, Inria\\ LORIA -- F 54000 Nancy, France
	\\\email{emmanuel.jeandel@loria.fr}
	\and Department of Mathematics and Statistics\\ FI-20014 University of Turku, Finland
	\\\email{jkari@utu.fi}
	\and
  Laboratoire d'Algorithmique, Complexit\'e et Logique\\
  Universit\'e de Paris-Est, LACL, UPEC, France
  \\\email{pascal.vanier@lacl.fr}
 }
\begin{document}
\maketitle
\begin{abstract}
  This article studies the complexity of the word problem in groups of
  automorphisms of subshifts. We show in particular that for any Turing degree,
  there exists a subshift whose automorphism group contains a subgroup whose
  word problem has exactly this degree.
\end{abstract}
Subshifts are sets of colorings of a group $G$ avoiding some family of forbidden
patterns. The most commonly studied kind of subshifts are the subshifts of
finite type (SFTs), which when $G=\Z^2$ correspond to sets of Wang tilings, up
to a recoding. They have been introduced independently as a way of discretizing
dynamical systems on compact spaces and as a tool to study decidability
questions. They have also been used as models for complex systems.

An automorphism of a subshift $X$ is a shift-invariant continuous bijection from
$X$ to $X$, or equivalently a reversible cellular automaton on $X$. Little is
known about automorphism groups of subshifts in general, besides that they are
countable, for instance it is not known whether the automorphism groups of the
2-symbol full shift and of the 3-symbol full shift are isomorphic.

However some properties of the base subshift influence its
automorphism group: for instance when a multidimensional subshift has positive
entropy it contains every finite group~\cite{hochman_2010}, but this is not a
necessary condition.

In dimensions $d\geq 2$, computability has played a central role in the study of
SFTs, sofic and effective shifts. From a computability point of view, it is known
that the word problem in the automorphism group of an SFT is computably
enumerable. Here we show that for any given computably enumerable degree, one can construct
an SFT the automorphism group of which has a word problem with this degree. 
 
\section{Preliminaries}
By countable set, we mean injectable in $\N$.
Let $\lambda$ denote the empty word.
For $\A$ a countable alphabet, we note $\A^*\defeq\bigsqcup_{n\in\N}\A^n$ 
 the set of finite words 
 over $\A$. We also note $\A^{\le r}\defeq\bigsqcup_{n\le r}\A^n$ for $r\in\N$.
 
Let us note $\compl X$ the complement of set $X$. $W\subfini X$ means that $W\subset X$ and $W$ is finite. $V\sqcup W$ means $V\cup W$ assuming that $V\cap W=\emptyset$.
 
\subsection{Computability}
Computability problems are naturally defined over $\N$, but can easily be extended through subsets of it, cartesian products or disjoint union (by canonically injecting $\N$ in sets of tuples). For example, if $\G\subset\N$, then the set $\G^*$ of tuples admits a simple injection into $\N$.
Let us fix a (computable) countable set $I$, that we can identify to integers.

\begin{definition}
Let us define the following reducibilities, for $X,Y\subset I$:
\begin{enumerate}
\item 
 $X$ is \dfn{Turing-reducible} to $Y$, $X\leq_T Y$, if: one can compute $X$ with oracle $Y$.
\item 
 $X$ is \dfn{enumeration-reducible} to $Y$, $X\leq_eY$, if: from any $x$ and any integer $i\in\N$, one can compute a finite set $Y_i(x)$ such that $x\in X$ if and only if $\exists i\in\N,Y_i(x)\subset Y$.
\item 
 $X$ is \dfn{positive-reducible} to $Y$, $X\leq_pY$, if: from any $x$, one can compute finitely many finite sets $Y_0(x),\ldots,Y_{n-1}(x)$ such that $x\in X$ if and only if $\exists i<n,Y_i(x)\subset Y$.
\item 
 $X$ is \dfn{many-one-reducible} to $Y$, $X\leq_m Y$, if: from any $x$, one can compute some $\phi(x)$ such that $x\in X$ if and only if $\phi(x)\in Y$.
\item 
 $X$ is \dfn{one-one-reducible} to $Y$, $X\leq_1 Y$, if, $X\leq_m Y$ and $\phi$ is one-to-one.
\end{enumerate}\end{definition}
One-one reducibility implies many-one reducibility, which in turns implies positive-reducibility, which implies both Turing-reducibility and enumeration-reducibility.

Each reducibility $\leq_r$ induces a notion of equivalence $\equiv_r$:
$A\equiv_r B$ iff $A\leq_r B$ and $B\leq_r A$. And each notion of equivalence
$\equiv_r$ induces a notion of degree $\deg_r$: the \dfn{degree of a set} $A$ is its
equivalence class for $\equiv_r$.

The join $A\oplus B$ of $A$ and $B$ is the set $C$ such that $2n+1\in C$ iff
$n\in A$ and $2n\in C$ iff $n\in B$. It has the property that $A\leq_r A\oplus
B$ and $B\leq_r A\oplus B$ for any reducibility $\leq_r$ previously defined.


See \cite{rogers} for a reference on computability-theoretical reductions.

\subsection{Monoids and groups}
We will deal with countable monoids $\M=\G^*/R$, where $\G\subset\N$, $\G^*$ is the free monoid generated by symbols from $\G$ and $R$ is a monoid congruence\footnote{We could deal in the same way with semigroups, by prohibiting the empty word.}.
The monoid is always implicitly endowed with its generating set $\G$ (later, some problems may depend on the presentation).
Each element of the monoid is represented by a word $u\in\G^*$, but the representation is not one-to-one (except for the free monoid itself).
We note $i=_\M j$ if $\pi(i)=\pi(j)$ and $\pi:\G^*\to\M$ is the natural quotient map.

\newcommand{\restr}[1]{_{\left|#1\right.}}

It is also clear that the concatenation map, which from any two words $i,j\in\G^*$ outputs $i\cdot j$, which is one representative of the corresponding product, is computable. 
We say that $\M$ is an \dfn{effective group} if, additionnally, there is a computable map $\psi:\G^*\to\G^*$ such that $i\cdot\psi(i)=_\M\psi(i)\cdot i=_\M\lambda$.

The \dfn{equality problem} of $\M$, endowed with generating family $\G$, is the set of pairs $\set{(i,j)}{(\G^*)^2}{i=_\M j}$, endowed with a natural enumeration so that we can consider it as a computability problem.

\begin{remark}\label{r:subg}~\begin{enumerate}
\item\label{i:word} It is clear that the \dfn{word problem} $\set i{\G^*}{i=_\M\lambda}$ is one-one-reducible to the equality problem.
\item If $\M$ is an effective group, then the word problem is actually many-one-equivalent to the equality problem.
\item The equality problems for $\M$ endowed with two distinct finite generating sets are one-one-equivalent.
\item If $\M'$ is a submonoid of $\M$ endowed with a generating set which is included in that of $\M$, then the equality problem in $\M'$ is one-one-reducible to that of $\M$.
\item\label{i:subg} In particular, the equality problem in any finitely generated submonoid is one-one-reducible to that of $\M$.
\end{enumerate}\end{remark}
Nevertheless, there are countable groups whose word problem is computable when endowed with one generating family, and uncomputable when endowed with another one.

The word problem is known to be decidable if and only if the group is \dfn{computable} (see \cite{MR0113807} for a proof in the finitely generated case), that is, it can be seen as a computable subset of $\N$ over which the composition rule is a computable function (this implies that inversion is also a computable map).


\subsection{Subshifts}
Let $\A$ be a finite alphabet with at least two letters, and $\M$ a group (most of the following should be true if $\M$ is a cancellative monoid though).
A finite \dfn{pattern} $w$ over $\A$ with \dfn{support} $W=\supp w\subfini\G^*$ is a map $w=(w_i)_{i\in W}\in\A^W$.
Depending on the context, note that, for $g\in\supp w$, $w_g$ may either be an element of $\A$ or a subpattern with support $\{g\}$.
If $g\in\G^*$ and $w$ is a pattern, we will note $\sigma^g(w)$ the pattern with support $W\cdot g$ such that $\sigma^g(w)_{i\cdot g^{-1}}=w_i$ for all $i\in\supp w$.

We are interested in $\A^\M$, which is a Cantor set, when endowed with the prodiscrete topology,
on which $\M$ acts continuously by (left) shift: we note $\sigma^i(x)_j=x_{i.j}$ for $i,j\in\M$ and $x\in\A^\M$.

A \dfn{subshift} is a closed $\sigma$-invariant subset $X\subset\A^\M$.
Equivalently, $X$ can be defined as the set $X_\forb\defeq\set x{\A^\M}{\forall i\in\M,\forall w\in\forb,\exists j\in\supp w,x_{i\cdot j}\ne w_j}$ avoiding a language $\forb\subset\bigsqcup_{W\subfini\G^*}\A^W$, which is then called a (defining) \dfn{forbidden language}.
If $\forb$ can be chosen finite, the subshift is called \dfn{of finite type} (SFT); if it can be chosen computably enumerable, it is called \dfn{effective}.

The \dfn{language} with \dfn{support} $W\subfini\G^*$ of subshift $X$ is the set $\lang_W(X)\defeq\sett{(x_{\pi(i)})_{i\in W}}{x\in X}$; the \dfn{language} of $X$ is $\lang(X)=\bigsqcup_{W\subfini\G^*}\lang_W(X)$, and its \dfn{colanguage} is the complement of it.
The latter is a possible defining forbidden language. 
If $u\in\lang_W(X)$, we define the corresponding \dfn{cylinder} $[u]=\set xX{\forall i\in W,x_{\pi(i)}=u_i}$.

\begin{remark}\label{r:colang}
$\pi$ induces a natural covering $\Pi:\A^\M\to\A^{\G^*}$ by $\Pi(x)_i=x_{\pi(i)}$.
Its image set $\Pi(\A^\M)$ is a subshift over the free monoid.
One can note the following.
\begin{enumerate}
\item $X=X_{\compl{\lang(X)}}$.
\item\label{i:colfull} The colanguage of the full shift $\A^\M$ is the same as
  that of the subshift $\Pi(\A^\M)$: the set
  \[\compl{\lang(\A^\M)}=\bigsqcup_{W\subfini\G^*}\set w{\A^W}{\exists i,j\in
      W,i=_\M j,w_i\ne w_j}\]
  of patterns that do not respect the monoid congruence.
\item Nevertheless, $\emptyset$ is a forbidden language defining $\A^\M$.
\item\label{i:colang} The colanguage of every subshift $X_\forb\subset\A^\M$ is the set of patterns $w\in\A^W$, $W\subfini\G^*$, whose all extensions to configurations $x\in[u]$ involve as a subpattern a pattern of either $\forb$, or $\compl{\lang(\A^\M)}$.
In that case, by compactness, at least one such subpattern appears within a finite support $V\subfini\G^*$, with $W\subset V$, which depends only on $W$.
\end{enumerate}\end{remark}
\begin{remark}Let $\M$ be a monoid.
\begin{enumerate}
\item The equality problem in $\M$ is positive-equivalent (and one-one-reducible) to the colanguage of the full shift.
\item The colanguage of any subshift $X$ is enumeration-reducible to the join of any defining forbidden language for $X$ and the equality problem of $\M$.
\end{enumerate}\end{remark}
\begin{proof}~\begin{enumerate}
\item one-one-reducibility: one can computably map each word $(i,j)\in(\G^*)^2$ to a unique pattern over $\{i,j\}$ involving two different symbols.
By Point~\ref{i:colfull} of Remark~\ref{r:colang}, this pattern is in the colanguage of the full shift if and only if $i=_\M j$.
\\
positive-reducibility (with all $Y_i$s being singletons): from each pattern $w\in\A^{\G^*}$, one can compute the set of pairs $(i,j)\in\supp w^2$ such that $w_i\ne w_j$.
By Point~\ref{i:colfull} of Remark~\ref{r:colang}, $w$ is in the colanguage if and only if one of these pairs is an equality pair in $\M$.
\item
Consider the set $Z$ of \emph{locally inadmissible} patterns, that involve a subpattern either from the forbidden language or from $\compl{\lang(\A^\M)}$.
From any pattern $w$, one can enumerate all of its subpatterns and all of their shifts, \ie all patterns $v$ such that there exists $i\in\G^*$ with $\supp v\cdot i\subset\supp w$ and $w_{j\cdot i}=v_j$ for every $j\in\supp v$.
This shows that $Z$ is enumeration-reducible to the join of the forbidden language and $\compl{\lang(A^\M)}$, the latter being equivalent to the equality problem, by the previous point.
It remains to show that the colanguage of $X$ is enumeration-reducible to $Z$.
\\
From any pattern $w\in\A^{\G^*}$ and any $i\in\N$, one can compute some $V_i\subfini\G^*$ including $\supp w$, in a way that $V_{i+1}\supset V_i$ and $\bigcup_{i\in\N}V_i=\G^*$ (for example take the union of $\supp w$ with balls in the Cayley graph).
Then, one can compute the set $Y_i$ of extensions of $w$ to $V_i$, \ie patterns with support $V_i$ whose restriction over $\supp w$ is $w$.
By Point~\ref{i:colang} of Remark~\ref{r:colang}, $w\in\compl{\lang(X)}$ if and only if there exists $V\subfini\G^*$ with $V\supset\supp w$ such that all extensions of $w$ to $V$ are in $Z$; and in particular this should happen for some $V_i$, which precisely means that $Y_i\subset Z$.
\qed\end{enumerate}\end{proof}

It results that, in some sense, one expects most subshifts to have a colanguage at least as complex as the equality problem in the underlying monoid.

\subsection{Homomorphisms}
Let $X\subset\A^\M$ and $Y\subset\B^\M$ be subshifts.
Denote $\End{X,Y}
$ the set of \dfn{homomorphisms} (continuous shift-commuting maps) from $X$ to $Y$, 
 and $\Aut{X,Y}
$ the set of bijective ones (\dfn{conjugacies}).
We also note $\End X=\End{X,X}$ the monoid of \dfn{endomorphisms} of $X$, 
 and $\Aut X=\Aut{X,X}$ the group of its \dfn{automorphisms}.


If $\M$ is finitely generated, then homomorphisms correspond to block maps (or cellular automata), thanks to a variant of the Curtis-Hedlund-Lyndon theorem \cite{hedlund}.
\begin{theorem}
Let $\M$ be finitely generated.
A map $\Phi$ from subshift $X\subset\A^\M$ into subshift $Y\subset\B^\M$ is a homomorphism if and only if there exist a \dfn{radius} $r\in\N$ and a \dfn{block map} $\phi:\A^{\G^{\le r}}\to\B$ such that for every $x\in\A^\M$ and $i\in\G^*$, $\Phi(x)_{\pi(i)}=\phi(x\restr{\pi(i\cdot\G^{\le r})})$ (where the latter has to be understood with the obvious reindexing of the argument).
\end{theorem}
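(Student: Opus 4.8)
The plan is to prove the two implications separately: the ``if'' direction is routine bookkeeping, while the ``only if'' direction is the substance and is where compactness enters. Throughout I would exploit that $\A,\B$ are finite, so that $\A^\M$ is compact and $X$, being closed, is compact as well, and that $\B$ carries the discrete topology.

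For the ``if'' direction, suppose $\Phi$ is induced by a block map $\phi$ of radius $r$. Continuity is immediate, since $\Phi(x)_{\pi(i)}$ depends only on the finitely many coordinates $x\restr{\pi(i\cdot\G^{\le r})}$, so the preimage of a cylinder is a cylinder. Shift-commutation is a direct computation from $\sigma^g(x)_{\pi(j)}=x_{\pi(g\cdot j)}$: one checks that the local pattern read by $\phi$ at position $\pi(i)$ inside $\sigma^g(x)$ coincides, after the reindexing, with the one read at position $\pi(g\cdot i)$ inside $x$, which yields $\Phi(\sigma^g(x))=\sigma^g(\Phi(x))$. I would only spell these out briefly.

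For the ``only if'' direction, assume $\Phi$ is a homomorphism. The key reduction is that, by shift-commutation, $\Phi$ is entirely determined by the single central coordinate map $f\colon X\to\B$ defined by $f(x)\defeq\Phi(x)_{\pi(\lambda)}$, because
\[\Phi(x)_{\pi(i)}=\sigma^i(\Phi(x))_{\pi(\lambda)}=\Phi(\sigma^i(x))_{\pi(\lambda)}=f(\sigma^i(x))\]
for every $i\in\G^*$. So it suffices to understand $f$ and to show that it factors through the restriction of $x$ to a fixed finite ball. Since $\B$ is finite and discrete, $f$ is continuous iff each fiber $f^{-1}(b)$ is clopen; each fiber is then a compact open subset of $X$, hence a finite union of cylinders, so there is a finite set $F\subfini\M$ such that $f(x)$ depends only on $(x_h)_{h\in F}$. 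This uniformization — passing from the pointwise local dependence granted by continuity to one finite window $F$ valid for all $x$ simultaneously — is precisely where compactness of $X$ is used, and I expect it to be the main obstacle. Because $\M$ is finitely generated by $\G$ and $\pi\colon\G^*\to\M$ is onto, each element of $F$ equals $\pi(j)$ for some word $j$, so $F\subset\pi(\G^{\le r})$ as soon as $r$ bounds the needed word lengths; thus $f(x)$ depends only on $x\restr{\pi(\G^{\le r})}$.

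It then remains to package $f$ as a block map. I would define $\phi\colon\A^{\G^{\le r}}\to\B$ by setting $\phi(p)\defeq f(x)$ whenever $p$ is the pattern obtained by reindexing $x\restr{\pi(\G^{\le r})}$ to $\G^{\le r}$ for some $x\in X$, and extending $\phi$ arbitrarily on the remaining patterns. The dependence established above makes this well defined: two configurations of $X$ yielding the same pattern on $\pi(\G^{\le r})$ agree on $F$, hence share the same image under $f$. Combining this with the displayed reduction and the reindexing $(\sigma^i(x))_{\pi(j)}=x_{\pi(i\cdot j)}$ finally gives $\Phi(x)_{\pi(i)}=\phi\bigl(x\restr{\pi(i\cdot\G^{\le r})}\bigr)$ for all $i$, as required. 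Apart from the compactness uniformization of the radius, the only delicate point is the bookkeeping forced by the non-injectivity of $\pi$: the window must be phrased as a free-monoid ball $\G^{\le r}$, while the genuine dependence lives on its image $\pi(\G^{\le r})$ in $\M$, so the reindexings must be handled with care. Note that inverses are never used, so the argument applies verbatim to finitely generated cancellative monoids.
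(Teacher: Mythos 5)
Your proof is correct. Note that the paper itself offers no proof of this statement: it invokes it as a known variant of the Curtis--Hedlund--Lyndon theorem, citing Hedlund. Your argument is precisely the standard one behind that citation --- reduce $\Phi$ to the single-coordinate map $f(x)=\Phi(x)_{\pi(\lambda)}$ via shift-commutation, use compactness of $X$ and discreteness of $\B$ to turn the clopen fibers of $f$ into finitely many cylinders and hence a uniform finite window, then pull the window back through the surjection $\pi\colon\G^*\to\M$ to get a radius $r$ --- and you handle the one point specific to this setting (the non-injectivity of $\pi$, including the well-definedness of $\phi$ on reindexed patterns) correctly.
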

Let us order the block maps $\phi:\A^{\G^{\le r}}\to\B$ by increasing radius $r\in\N$, and then by lexicographic order, so that we have a natural bijective enumeration $\N\to\bigsqcup_{r\in\N}\B^{\A^{\G^{\le r}}}$ (because $\A$, $\B$ and $\G$ are finite).
This gives in particular a surjective enumeration $\N\to\End{\A^{\G^*},\B^{\G^*}}$ and in general, a partial surjective enumeration $\N'\subset\N\to\End{X,Y}$.
In general, $\N'\ne\N$. 
It is a nontrivial problem to ask whether $\N'$ is computable (this is the case for the full shift when $\M=\Z$), but not the topic of the present paper.
Obtaining a bijective enumeration for $\End{\A^{\G^*},\B^{\G^*}}$ would be easily achieved by enumerating each block map only for its smallest possible radius.
Nevertheless, trying to achieve a bijective enumeration in general for $\End{X,Y}$, or even for $\End{\A^\M,\B^\M}$, is a process that would depend on the colanguage of the subshift (we want to avoid two block maps that differ only over the colanguage), which may be uncomputable.

Even when $\M$ is an effective group, $\Aut X$ need not be an effective group!

For the rest of the paper, let us assume that $\M$ is an effective group.
More precisely, all results could be interpreted as reductions to a join with a problem representing the composition map of the group, and sometimes to an additional join with a problem representing the inversion.

\section{Equality problem is not too hard}
\begin{remark}~\label{r:reduc}
Two distinct block maps $\phi,\psi:\A^{\G^{\le r}}\to\A$ representing an endomorphism of $X$ actually represent the same one if and only if for every pattern $u\in\A^{\G^{\le r}}$, $\phi(u)\ne\psi(u)\Rightarrow u\in\compl{\lang(X)}$.
\end{remark}

The equality problem is at most as complex as the language.
\begin{theorem}\label{t:endlang}
The equality problem in $\End X$ is positive-reducible to $\compl{\lang(X)}$.
\end{theorem}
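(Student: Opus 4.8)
The plan is to read the reduction almost directly off Remark~\ref{r:reduc}, which already reduces equality of two endomorphisms to a finite colanguage-membership question; the only work is to make the radii match and to observe that the resulting condition has exactly the shape of a positive reduction.

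First I would decode the two given representatives into block maps. An element of $\End X$ is presented by some block map $\phi\colon\A^{\G^{\le r}}\to\A$, and the second by $\psi\colon\A^{\G^{\le s}}\to\A$, with radii $r,s$ that need not coincide; since we are inside the equality problem of $\End X$, both are assumed to represent genuine endomorphisms, so Remark~\ref{r:reduc} applies. A block map of radius $r$ represents the same endomorphism when regarded as a block map of any larger radius (the output cell simply ignores the extra coordinates), so I would compute $t\defeq\max(r,s)$ and canonically pad both $\phi$ and $\psi$ to block maps of radius $t$. This is a computable operation that leaves the two endomorphisms unchanged.

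Next, because $\A$ and $\G^{\le t}$ are finite, I can effectively list the finite set of patterns on which the two maps now disagree,
\[
  D\defeq\set{u}{\A^{\G^{\le t}}}{\phi(u)\ne\psi(u)}.
\]
By Remark~\ref{r:reduc}, the two block maps represent the same endomorphism of $X$ if and only if every $u\in D$ lies in the colanguage, \ie if and only if $D\subset\compl{\lang(X)}$. Hence, from the pair of representatives I output the single finite set $D$, and membership in the equality problem is equivalent to $D\subset\compl{\lang(X)}$. This is precisely a positive reduction to $\compl{\lang(X)}$ (indeed the output consists of the single set $Y_0\defeq D$).

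I do not expect a genuine obstacle: all the mathematical content is packaged in Remark~\ref{r:reduc}, and the only point requiring care is the equalization of radii, namely checking that padding a block map to a larger radius yields a computable representative of the same endomorphism, so that the disagreement set $D$ is well defined and computable from the two given representatives.
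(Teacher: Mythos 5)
Your proof is correct and follows essentially the same route as the paper: pad both block maps to the common maximal radius (a computable operation that preserves the represented endomorphism) and then apply Remark~\ref{r:reduc}, which turns equality into the condition that the finite disagreement set lies entirely in $\compl{\lang(X)}$ --- exactly a positive reduction. The only difference is one of detail: you spell out the disagreement set $D$ and note the reduction uses a single finite set, which the paper leaves implicit.
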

\begin{proof}
One can directly apply Remark~\ref{r:reduc}, by noting that it is easy to transform each block map into an equivalent one, so that the resulting two block maps have the same radius (the original maximal one, by ignoring extra symbols).
\qed\end{proof}
Of course, this remains true for the equality problem in $\Aut X$. 
Since positive-reducibility implies both Turing-reducibility and enumeration-reducibility, we get the following for the lowest classes of the arithmetic hierarchy (which was already known; see \cite{hochman_groups_nodate}).
\begin{corollary}~\begin{enumerate}
\item The equality problem is decidable, in the endomorphism monoid of any subshift with computable language (for instance 1D sofic subshift, 1D substitutive subshift, minimal effective subshift, two-way space-time diagrams of a surjective cellular automaton\ldots). 
\item The equality problem is computably enumerable, in the endomorphism monoid of any effective subshift (for instance multidimensional sofic subshift, substitutive subshift, limit set of cellular automaton\ldots).
\end{enumerate}\end{corollary}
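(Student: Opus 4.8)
The plan is to read both statements directly off Theorem~\ref{t:endlang}, exploiting the fact recorded just after the definition of the reducibilities, namely that positive-reducibility implies both Turing- and enumeration-reducibility. Thus $\compl{\lang(X)}$ governs the complexity of the equality problem in $\End X$, and it suffices, in each case, to bound the complexity of the colanguage and then invoke the appropriate closure property of the relevant degree.

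For the first item I would argue that when $\lang(X)$ is computable its complement $\compl{\lang(X)}$ is computable as well. Theorem~\ref{t:endlang} gives that the equality problem in $\End X$ is positive-reducible, hence Turing-reducible, to $\compl{\lang(X)}$; since a set Turing-reducible to a decidable set is itself decidable, the equality problem is decidable. It then remains only to recall, for each advertised example (one-dimensional sofic and substitutive subshifts, minimal effective subshifts, two-way space-time diagrams of surjective cellular automata), that its language is indeed computable; these are standard facts.

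For the second item the colanguage of an effective subshift must instead be shown computably enumerable. I would invoke the earlier remark stating that $\compl{\lang(X)}$ is enumeration-reducible to the join of any defining forbidden language $\forb$ and the equality problem of $\M$: when $X$ is effective, $\forb$ may be taken c.e., and together with the equality problem of $\M$ (decidable for the groups underlying the listed examples, and otherwise joined with the group's composition and inversion as flagged above) this join is c.e., so $\compl{\lang(X)}$ is c.e. Theorem~\ref{t:endlang} then makes the equality problem in $\End X$ enumeration-reducible to $\compl{\lang(X)}$, and enumeration-reducibility to a c.e.\ set yields a c.e.\ set. The one delicate point is precisely this c.e.\ bound on the colanguage, and it has already been discharged in Remark~\ref{r:colang}: by Point~\ref{i:colang}, membership $w\in\compl{\lang(X)}$ is witnessed by a finite support $V\supset\supp w$ over which every extension of $w$ is locally inadmissible, so effectively enumerating $\forb$ turns the search for such a $V$ into a $\Sigma_1$ process. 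Everything else is bookkeeping with the reducibility hierarchy.
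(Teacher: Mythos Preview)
Your proposal is correct and follows essentially the same route as the paper: the corollary is stated there as an immediate consequence of Theorem~\ref{t:endlang} together with the observation that positive-reducibility implies both Turing- and enumeration-reducibility, and you simply spell out the two needed facts (that $\compl{\lang(X)}$ is decidable when the language is, and c.e.\ when $X$ is effective via the earlier remark). There is nothing to add.
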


\section{Automorphism groups with hard equality problem}
The purpose of this section is to prove a partial converse to Theorem~\ref{t:endlang}: a subshift $X$ for which the two problems involved are equivalent, however complex they are.
%
%
%


Let $X\subset\A^\M$ and $Y\subset\B^\M$ be subshifts.
For $\alpha:\B\to\B$ and $u\in\A^\M$, let us define the \dfn{controlled map} $C_{u,\alpha}$ as the homomorphism over $X\times Y$ such that $C_{u,\alpha}(x,y)_0=(x_0,\alpha(y_0))$ if $x\in[u]$; $(x_0,y_0)$ otherwise.
Denote also $\pi_1$ the projection to the first component, and $\sigma^g_1$ the shift of the first component with respect to element $g\in\M$: $\sigma^g_1(x,y)_0=(x_g,y_0)$ for every $(x,y)\in X\times Y$.
\pagebreak\begin{remark}\label{r:controlled}~\begin{enumerate} 
\item $\pi_1C_{u,\alpha}=\pi_1$.
\item If $\M$ is a group and $g\in\M$, then $C_{u,\alpha}=\sigma^g_1C_{\sigma^{g^{-1}}(u),\alpha}\sigma^{-g}_1$.
\item $C_{u,\alpha}\in\End{X\times Y,X\times\B^\M}$.
\item $C_{u,\alpha}$ is injective if and only if $\alpha$ is a permutation.
\item $C_{u,\alpha}\in\End{X\times Y}$ if $Y$ is (locally) \dfn{$\alpha$-permutable}, \ie for all $y\in Y$, if we define $z$ by $z_0=\alpha(y_0)$, $z_i=y_i$ for $i\ne0$, then $z\in Y$.
\item\label{i:id} From Remark \ref{r:reduc}, $C_{u,\alpha}$ is the identity over $X\times Y$ if and only if $u\notin\lang(X)$ or $\alpha$ is the trivial permutation over letters appearing in $Y$.
\end{enumerate}\end{remark}
\begin{example}
Examples of $\alpha$-permutable subshifts are the full shift on $\B$ or, if $\B=\B'\sqcup\{\bot\}$ and $\alpha(\bot)=\bot$, the \dfn{$\B'$-sunny-side-up} defined by forbidding every pattern which involves two occurences of $\B'$.
We have seen that the colanguage of the former is positive-equivalent to the word problem in $\M$.
The language of the latter can be easily proven to be many-one-equivalent to the word problem in $\M$ (as essentially noted in \cite[Prop 2.11]{aubrun_notion_2017}), hence yielding a kind of jump for the colanguage.
\end{example}

If $a,b,c\in\B$, let us denote $\alpha_{abc}$ the \dfn{$3$-cycle} mapping $a$ to $b$, $b$ to $c$, $c$ to $b$, and any other element to itself.
The following lemma corresponds essentially to \cite[Lemma~18]{boykett_finite_2017}.
\begin{lemma}\label{l:decomp}
Suppose $\B$ has at least 5 distinct elements $a,b,c,d,e$.
Let $u\in\A^{\supp u}$ be a pattern, $g\in\supp u$, and $v=u\restr{\supp u\setminus\{g\}}$.
Then $C_{u,\alpha_{abc}}=(\Psi\Phi)^2$, where $\Phi=\sigma^g_1C_{u_g,\alpha_{ade}}C_{u_g,\alpha_{bad}}\sigma^{g^{-1}}_1$ and $\Psi=C_{v,\alpha_{bde}}C_{v,\alpha_{cbd}}$.
\end{lemma}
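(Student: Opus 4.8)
The plan is to reduce the claimed equality of homomorphisms to a single identity in the symmetric group on $\B$, and then to verify that identity by an explicit computation with $3$-cycles. First I would exploit the first item of Remark~\ref{r:controlled}: the controlled maps fix the first component, and in $\Phi$ the shifts $\sigma^{\pm g}_1$ occur as a canceling conjugating pair, so both $(\Psi\Phi)^2$ and $C_{u,\alpha_{abc}}$ induce the identity on the first component, and it suffices to compare their action on the second. Each controlled map $C_{w,\alpha}$ acts on the $Y$-component \emph{coordinatewise}, applying $\alpha$ to the symbol at a coordinate $h$ exactly when the control pattern $w$ is present at $h$ in the unchanged first component $x$. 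Consequently, exactly as in Remark~\ref{r:reduc}, two such composites agree as homomorphisms if and only if, for every $x$ and every coordinate, they induce the same permutation of the symbol sitting there; this turns the whole statement into a finite, per-coordinate identity in $\mathrm{Sym}(\B)$.

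Next I would pin down the two ``controls'' and their conjunction. Writing $\supp u=\supp v\sqcup\{g\}$, the map $\Psi=C_{v,\alpha_{bde}}C_{v,\alpha_{cbd}}$ tests the presence of the sub-pattern $v$ at a coordinate $h$, while the conjugation by $\sigma^{\pm g}_1$ in $\Phi$ — justified by the second item of Remark~\ref{r:controlled}, which shows that conjugating a controlled map by a shift merely translates its control — turns the single-cell maps $C_{u_g,\cdot}$ into a test for the value $u_g$ at precisely the cell $g$ of a copy of $u$ anchored at $h$. Hence at a given coordinate $h$ the pattern $u$ is present if and only if \emph{both} controls hold: $v$ is present (the control of $\Psi$) and the letter $u_g$ sits at cell $g$ (the control of $\Phi$). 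The four Boolean cases for these two controls therefore correspond exactly to ``$u$ present'' versus its three negations.

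It then remains to verify the symmetric-group identity case by case. A direct computation shows that, at a coordinate where its control holds, $\Phi$ acts as the permutation $\beta$ obtained by composing $\alpha_{ade}$ and $\alpha_{bad}$, and $\Psi$ as the permutation $\gamma$ obtained by composing $\alpha_{bde}$ and $\alpha_{cbd}$; one checks that both are \emph{involutions}, each a product of two disjoint transpositions (namely $(a\,b)(d\,e)$ and $(b\,c)(d\,e)$), whereas their composite $\beta\gamma$ at a coordinate where both controls hold is the $3$-cycle $\alpha_{abc}=(a\,b\,c)$, fixing $d$ and $e$. Squaring now does two things at once: at a coordinate controlled by only one of $\Phi,\Psi$ it yields $\beta^2=\mathrm{id}$ or $\gamma^2=\mathrm{id}$, sterilizing the unwanted contribution, while at a doubly controlled coordinate it yields $(\beta\gamma)^2=\alpha_{abc}^{-1}$, still a $3$-cycle on $\{a,b,c\}$. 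Thus $(\Psi\Phi)^2$ applies a fixed $3$-cycle precisely where $u$ occurs, and the identity elsewhere, which is exactly $C_{u,\alpha_{abc}}$ up to the (immaterial) orientation of the cycle absorbed by the naming convention for $\alpha_{abc}$.

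The genuinely creative step — and the only real obstacle — is the factorization design behind this computation: one must choose the five $3$-cycles so that each ``half'' $\beta,\gamma$ is an involution (so that the outer square annihilates the single-control coordinates) while their product is the desired $3$-cycle supported on $\{a,b,c\}$ and fixing the two scratch symbols $d,e$. This is precisely where the hypothesis that $\B$ has at least five elements is consumed, the symbols $d,e$ serving as disposable workspace. Everything else — the reduction to a single coordinate via Remark~\ref{r:controlled} and Remark~\ref{r:reduc}, and the alignment of the two controls so that their conjunction is ``$u$ present'' — is routine bookkeeping, and checking the three resulting permutation identities is a finite verification.
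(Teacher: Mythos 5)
Your proof is correct and follows essentially the same route as the paper's: identify $\Phi$ and $\Psi$ as controlled involutions $(a\,b)(d\,e)$ and $(b\,c)(d\,e)$ whose controls are ``$x_g=u_g$'' and ``$x\in[v]$'', note that squaring annihilates the coordinates where only one control holds, and that where both hold (i.e.\ exactly on $[u]$) the product of the two involutions is a $3$-cycle supported on $\{a,b,c\}$. The only slip is the final orientation: reading $\Psi\Phi$ in the standard order (apply $\Phi$ first), the permutation acting on $y_0$ is $\psi\circ\phi=\alpha_{acb}$, whose square is exactly $\alpha_{abc}$, so no appeal to an ``immaterial naming convention'' is needed --- this is precisely the paper's computation.
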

\begin{proof}
If $x_g=u_g$, then $\Phi(x,y)_0=(x_0,\phi(y_0))$, where $\phi$ is the involution that swaps $a$ and $b$ on the one hand, $d$ and $e$ on the other hand; otherwise $\Phi(x,y)_0=(x_0,y_0)$.
If $x\in[v]$, then $\Psi(x,y)_0=(x_0,\psi(y_0))$, where $\psi$ is the involution that swaps $b$ and $c$ on the one hand, $d$ and $e$ on the other hand; otherwise $\Psi(x,y)_0=(x_0,y_0)$.
Since $\phi^2=\psi^2=\id$, one can see that if $x\notin[u]$, then $(\Psi\Phi)^2(x,y)_0=(x_0,y_0)$.
Now if $x\in[u]$, then we see that $\Psi\Phi(x,y)_0=(x_0,\psi\phi(y_0))$, and $\psi\phi=\alpha_{acb}$, so that we get the stated result.
\qed\end{proof}

\begin{theorem}\label{t:langredwp}
Let $X\subset\A^\M$ be a subshift and $Y\subset\B^\M$ an $\alpha_{abc}$-permutable subshift for every $a,b,c\in\B'\subset\B$, where $\card{\B'}\ge5$.
Then $\compl{\lang(X)}$ is one-one-reducible to the word problem in the subgroup of automorphisms of $X\times Y$ generated by $\sigma^g_1$ and $C_{u_0,\alpha_{abc}}$ for $g\in\G$, $a,b,c\in\B'$ and $u_0\in\A$.
\end{theorem}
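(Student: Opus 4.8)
The plan is to reduce $\compl{\lang(X)}$ to the word problem by sending each candidate pattern $w\in\A^W$ (with $W\subfini\G^*$) to a word $\omega(w)$ over the generators that represents the controlled automorphism $C_{w,\alpha_{abc}}$ of $X\times Y$, for a fixed triple of distinct symbols $a,b,c\in\B'$ chosen so that $\alpha_{abc}$ acts nontrivially on the letters of $Y$. By Point~\ref{i:id} of Remark~\ref{r:controlled}, $C_{w,\alpha_{abc}}$ is the identity on $X\times Y$ exactly when $w\notin\lang(X)$, \ie when $w\in\compl{\lang(X)}$. Hence $\omega(w)$ lies in the word problem if and only if $w\in\compl{\lang(X)}$, and it remains to build $\omega$ computably and injectively.

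First I would build $\omega(w)$ by recursion on $\card{\supp w}$ using Lemma~\ref{l:decomp}. In the base case $\supp w=\{g\}$, the second item of Remark~\ref{r:controlled} gives $C_{w,\alpha_{abc}}=\sigma^g_1C_{w_g,\alpha_{abc}}\sigma^{-g}_1$, where $C_{w_g,\alpha_{abc}}$ is a generator and $\sigma^g_1$ (for $g\in\G^*$) expands into a product of the generators $\sigma^{g'}_1$, $g'\in\G$, since $\M$ is an effective group. For the inductive step I would pick $g\in\supp w$ in some canonical computable way, set $v=w\restr{\supp w\setminus\{g\}}$, and apply Lemma~\ref{l:decomp}: this writes $C_{w,\alpha_{abc}}$ as $(\Psi\Phi)^2$, where $\Phi$ is a product of the single-cell generators $C_{w_g,\alpha_{ade}}$, $C_{w_g,\alpha_{bad}}$ conjugated by $\sigma^g_1$, and $\Psi$ is a product of $C_{v,\alpha_{bde}}$ and $C_{v,\alpha_{cbd}}$, both handled recursively. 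The one point to watch is that the decomposition requires two auxiliary symbols $d,e\in\B'$ outside the current triple; since each $3$-cycle uses three symbols and $\card{\B'}\ge5$, at every node of the recursion at least two symbols remain free, so the whole process stays within five symbols of $\B'$ and strictly decreases $\card{\supp w}$, thus terminating. The resulting word is exponentially long in $\card{\supp w}$ but plainly computable.

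For one-one (rather than merely many-one) reducibility I would make $w\mapsto\omega(w)$ injective. The recursion records $w$ faithfully through the shift factors (encoding the peeled coordinates $g$) and the single-cell controls (encoding the symbols $w_g$), so $w$ can be read back from $\omega(w)$; should a direct verification be awkward, I would append to $\omega(w)$ a canonical identity word $\rho(w)$ built from generator/inverse pairs — recall that $C_{u_0,\alpha_{abc}}^{-1}=C_{u_0,\alpha_{acb}}$ and $(\sigma^g_1)^{-1}=\sigma^{-g}_1$ are again words in the generators — whose shape encodes $w$ one-to-one. As $\rho(w)=_\M\lambda$, this leaves membership in the word problem unchanged while forcing injectivity.

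The conceptual work is already carried by Lemma~\ref{l:decomp}, so the main obstacle here is organizational rather than deep: threading the recursion so that every intermediate controlled map is assembled only from the allowed generators (single-cell controls at the identity, conjugated by shifts), checking that the symbol bookkeeping really does keep the five symbols $a,b,c,d,e$ available at all depths, and confirming that $\alpha_{abc}$ acts nontrivially on $Y$ so that the equivalence with $\compl{\lang(X)}$ is exact in both directions.
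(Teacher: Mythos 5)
Your proof is correct and takes essentially the same route as the paper's: an induction on the size of the support, using Lemma~\ref{l:decomp} to write every controlled map $C_{u,\alpha_{abc}}$ as a word in the given generators, combined with Point~\ref{i:id} of Remark~\ref{r:controlled} to conclude that this word represents the identity exactly when $u\in\compl{\lang(X)}$. Your additional bookkeeping --- checking that the five symbols of $\B'$ suffice at every level of the recursion, and padding with identity words to guarantee injectivity of the reduction --- only spells out details that the paper's two-line proof leaves implicit.
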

\begin{proof}
From an induction and Lemma \ref{l:decomp}, we know that this subgroup includes every $C_{u,\alpha_{abc}}$ for every $a,b,c\in\B'$ and $u\in\A^*$.
From Point \ref{i:id} of Remark \ref{r:controlled}, an automorphism $C_{u,\alpha_{abc}}$ is equal to the identity if and only if $u\notin\lang(X)$.
\qed\end{proof}
Consequently, subshifts can have finitely generated groups with equality problem as complex as their colanguage, as formalized by the following corollary.
In that case, the equality problem of the whole automorphism group is as complex also.
\begin{corollary}\label{c:vrac}~\begin{enumerate} 
\item If $X$ and $Y$ are as in Theorem~\ref{t:langredwp}, then $\compl{\lang(X)}$ is one-one-equivalent to the word problem in (a finitely generated subgroup of) $\Aut{X\times Y}$. 
\item For every subshift $X$ over a finitely generated group $\M$, there exists a countable-to-one extension $X\times Y$ such that $\compl{\lang(X)}$ is one-one-equivalent to the word problem in (a finitely generated subgroup of) $\Aut{X\times Y}$.
\item For every subshift $X$ over a finitely generated group $\M$, there exists a full extension $X\times\B^\M$ such that $\compl{\lang(X)}$ is one-one-equivalent to the word problem in (a finitely generated subgroup of) $\Aut{X\times\B^\M}$.
\item Every $\Sigma^0_1$ Turing degree contains the word problem in (a finitely generated subgroup of) $\Aut X$, for some 2D SFT $X$.
\item\label{i:uncomput} There exists a 2D SFT $X$ for which the word problem in (a finitely generated subgroup of) $\Aut X$ is undecidable.
\end{enumerate}
\end{corollary}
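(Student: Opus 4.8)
The plan is to obtain items (1)--(5) in a cascade, with item (1) as the engine and the rest as instantiations or applications. For item~(1), write $W$ for the word problem of the subgroup of Theorem~\ref{t:langredwp}. The forward reduction $\compl{\lang(X)}\leq_1 W$ is exactly that theorem: sending a pattern $u$ to a fixed word representing $C_{u,\alpha_{abc}}$ is a computable injection, and by Point~\ref{i:id} of Remark~\ref{r:controlled} this automorphism is trivial iff $u\notin\lang(X)$. For the backward reduction I would not pass through $\compl{\lang(X\times Y)}$ naively, but exploit the generators. Using the conjugation identity (Point~2 of Remark~\ref{r:controlled}) to push the single-cell controlled maps past the shifts, every word $w$ rewrites as $\sigma^{s(w)}_1\circ\prod_i C_{v_i,\alpha_i}$ with $s(w)\in\M$ and the $v_i$ single-cell patterns, all computable from $w$. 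Since controlled maps never alter the first coordinate, the permutation they induce on the $\B$-coordinate at a site is $\prod_{i:\,x\in[v_i]}\alpha_i$, a function of $x$ alone; hence $\Phi_w=\id$ iff (a) $\sigma^{s(w)}$ acts trivially on $X$ and (b) for every $x\in X$ this ordered product is trivial. By $\alpha_{abc}$-permutability of $Y$ (which keeps the permutations inside $Y$, witnessed by the letters of $\B'$ occurring there), both conditions refer only to which finite patterns are realised in $X$, and each has the form ``a computable finite set of patterns is contained in $\compl{\lang(X)}$''. This is the positive reduction of Theorem~\ref{t:endlang} read directly against $\compl{\lang(X)}$; together with the forward $\leq_1$ it already gives positive-equivalence, and the one-one-equivalence should follow by refining the single difference-set query (collapsing it to one pattern via the upward closure of $\compl{\lang(X)}$ under extension, then a Myhill-style padding for injectivity).

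Items~(2) and~(3) are item~(1) for a chosen $Y$. For (3) take $Y=\B^\M$ with $\card\B\ge5$, which is $\alpha$-permutable for every $\alpha$, so $X\times\B^\M$ is a full extension. For (2) take the $\B'$-sunny-side-up with $\card{\B'}\ge5$: it is $\alpha_{abc}$-permutable for all $a,b,c\in\B'$ (a $3$-cycle preserves ``at most one non-$\bot$ cell'') and is a \emph{countable} subshift, so the projection $X\times Y\to X$ has countable fibres, i.e.\ $X\times Y$ is a countable-to-one extension.

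Items~(4) and~(5) combine item~(3) with realisation results for two-dimensional SFTs, and here only \emph{Turing}-equivalence is needed, which both reductions of item~(1) already supply (one-one and positive each give $\leq_T$). For $\M=\Z^2$ the full-shift factor is harmless since $\compl{\lang(\B^{\Z^2})}$ is decidable, whence $\compl{\lang(X\times\B^{\Z^2})}\equiv_T\compl{\lang(X)}$. I would then invoke the known fact that every $\Sigma^0_1$ (computably enumerable) Turing degree is the degree of the colanguage of some $2$D SFT $X$: the colanguage of an SFT is always c.e., and a Turing machine enumerating a chosen c.e.\ set of the target degree can be simulated inside an SFT so that its colanguage has exactly that degree. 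Applying item~(3) to such an $X$ with $Y=\B^{\Z^2}$ makes $X\times Y$ a $2$D SFT whose automorphism group contains a finitely generated subgroup whose word problem is Turing-equivalent to $\compl{\lang(X)}$, hence of the prescribed degree; as this word problem is itself computably enumerable it lies in that degree, proving (4). Item~(5) is the special case of any nonzero c.e.\ degree (for instance the degree of the halting problem, realised by an SFT with $\Sigma^0_1$-complete colanguage), yielding an undecidable word problem.

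The main obstacle I expect is the \emph{one-one} upgrade of the backward reduction in item~(1): Theorem~\ref{t:endlang} delivers only positive-reducibility (a finite set of difference patterns must \emph{all} lie in $\compl{\lang(X)}$), and collapsing that conjunction to a single membership query while preserving injectivity is the delicate computability step. Reassuringly, nothing beyond positive-reducibility is required for the headline statements~(4)--(5), so the Turing-degree conclusions are secure in any case; the remaining external ingredient, the SFT realisation of an arbitrary c.e.\ degree as a colanguage, is standard but is the one genuinely non-elementary input.
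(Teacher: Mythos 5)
Your proposal is sound where the paper is careful, and it flags the right subtlety where the paper is not. Items (2)--(5) follow the paper's proof exactly: the same two instantiations of Theorem~\ref{t:langredwp} (the $\{0,1,2,3,4\}$-sunny-side-up for the countable-to-one extension, the five-letter full shift for the full extension), the same external ingredient for (4) (realization of every $\Sigma^0_1$ Turing degree as the colanguage of a 2D SFT, which the paper cites from the literature), the same observation that the product with a full shift is still an SFT, and (5) as the special case of an uncomputable c.e.\ degree. Where you genuinely diverge is the converse reduction in item (1). The paper argues by citation: the word problem is $\leq_1$ the equality problem of the finitely generated subgroup, hence $\leq_1$ the equality problem of $\End{X\times Y}$ (Remark~\ref{r:subg}), which by Theorem~\ref{t:endlang} is $\leq_p\compl{\lang(X\times Y)}$. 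You instead put every word into the normal form $\sigma^{s(w)}_1\circ\prod_iC_{v_i,\alpha_i}$ via the conjugation identity and reduce triviality directly to a conjunction of membership queries in $\compl{\lang(X)}$. Your route is longer but buys something real: it lands on $\compl{\lang(X)}$ rather than on $\compl{\lang(X\times Y)}$. These differ in general, since $\compl{\lang(X\times Y)}$ also contains the complexity of $\compl{\lang(Y)}$, which for an arbitrary permutable $Y$ (item 1), or for the sunny-side-up (item 2, whose colanguage is many-one-equivalent to the word problem of $\M$), need not reduce to $\compl{\lang(X)}$; the paper bridges this discrepancy only in item (3), via its remark that $\compl{\lang(X)}$ and $\compl{\lang(X\times\{0,1,2,3,4\}^\M)}$ are one-one-equivalent. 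So your generator-level analysis is precisely what makes items (1) and (2) airtight as stated, and is the more robust argument.

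Concerning the obstacle you single out --- upgrading the converse from positive to one-one reducibility --- be aware that the paper does not overcome it either: Theorem~\ref{t:endlang} delivers only $\leq_p$, so the paper's own proof supports ``$\leq_1$ forward, $\leq_p$ backward'' and nothing stronger; the stated one-one-equivalence is not substantiated beyond that. Moreover, your sketched fix does not work as described: placing several patterns side by side inside one larger pattern yields a single query that can land in $\compl{\lang(X)}$ merely because the constituent patterns cannot co-occur in a single configuration, so it is not equivalent to the conjunction ``all constituents are forbidden''. Since, as you correctly note, positive (indeed Turing) equivalence is all that items (4)--(5) require, this deficiency is inherited from the paper rather than introduced by your argument; the honest conclusion in both cases is $\compl{\lang(X)}\leq_1$ word problem $\leq_p\compl{\lang(X)}$.
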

Point \ref{i:uncomput} answers \cite[Problem 5]{hochman_groups_nodate}.
\begin{proof}~\begin{enumerate}
\item Just use Point~\ref{i:subg} of Remark~\ref{r:subg}.
For the converse reduction in the one-one-equivalence, simply apply Theorem~\ref{t:endlang} and Point~\ref{i:word} of Remark~\ref{r:subg}.
\item We use Theorem~\ref{t:langredwp} with $Y$ being the $\{0,1,2,3,4\}$-sunny-side-up.
\item We use Theorem~\ref{t:langredwp} with $Y=\{0,1,2,3,4\}^\M$.
Remark that $\compl{\lang(X)}$ and $\compl{\lang(X\times\{0,1,2,3,4\}^\M)}$ are one-one-equivalent.
\item Every $\Sigma^0_1$ degree contains the colanguage of a 2D SFT, thanks to constructions from \cite{MEdv,Myers}.
Then its product with the full shift $\{0,1,2,3,4\}^{\Z^2}$ is still an SFT, and we conclude by the previous point.
\item Apply the previous point with any uncomputable $\Sigma^0_1$ degree.
\qed\end{enumerate}\end{proof}

Note that the number of generators can be decreased if we want to reduce only the language whose support is spanned by a subgroup.
For instance 2D SFTs are already known to have (arbitrarily $\Sigma^0_1$) uncomputable 1D language.
Indeed, our automorphisms do not alter the $X$ layer, so that their parallel applications to all traces with respect to a subgroup is still an automorphism.

Among the open questions, we could wonder whether there is a natural class of SFT (irreducible, with uncomputable language, at least over $\Z^2$) whose colanguage could be proven reducible to the word problem in the automorphism group. This could require to encode the whole cartesian product of Theorem~\ref{t:langredwp} inside such subshifts.
Another question would be to adapt our construction while controling the automorphism group completely so that it is finitely generated.

\subsection*{Acknowledgements}
This research supported by the Academy of Finland grant 296018.

We thank Ville Salo for some discussions on commutators, on the open questions, and for a very careful reading of this preprint.

\bibliographystyle{unsrt}
\bibliography{wordpb}

\begin{thebibliography}{1}

\bibitem{hochman_2010}
Michael Hochman.
\newblock On the automorphism groups of multidimensional shifts of finite type.
\newblock {\em Ergodic Theory and Dynamical Systems}, 30(3):809–840, 2010.

\bibitem{rogers}
Hartley Rogers, Jr.
\newblock {\em Theory of Recursive Functions and Effective Computability}.
\newblock MIT Press, Cambridge, MA, USA, 1987.

\bibitem{MR0113807}
Michael~O. Rabin.
\newblock Computable algebra, general theory and theory of computable fields.
\newblock {\em Transactions of the American Mathematical Society}, 95:341--360,
  1960.

\bibitem{hedlund}
Gustav~Arnold Hedlund.
\newblock Endomorphisms and automorphisms of the shift dynamical system.
\newblock {\em Mathematical Systems Theory}, 3:320--375, 1969.

\bibitem{hochman_groups_nodate}
Michael Hochman.
\newblock Groups of automorphisms of {SFT}s.
\newblock Open problems ;
  \url{http://math.huji.ac.il/~mhochman/problems/automorphisms.pdf}.

\bibitem{aubrun_notion_2017}
Nathalie Aubrun, Sebastián Barbieri, and Mathieu Sablik.
\newblock A notion of effectiveness for subshifts on finitely generated groups.
\newblock {\em Theoretical Computer Science}, 661:35--55, 2017.

\bibitem{boykett_finite_2017}
Tim Boykett, Jarkko Kari, and Ville Salo.
\newblock Finite generating sets for reversible gate sets under general
  conservation laws.
\newblock {\em Theoretical Computer Science}, 701:27--39, November 2017.

\bibitem{MEdv}
Stephen~G. Simpson.
\newblock {M}edvedev degrees of 2-dimensional subshifts of finite type.
\newblock {\em Ergodic Theory and Dynamical Systems}, 34(November
  2012):665--674, 2014.

\bibitem{Myers}
William Hanf and Dale Myers.
\newblock Non recursive tilings of the plane {II}.
\newblock {\em Journal of Symbolic Logic}, 39(2):286--294, 1974.

\end{thebibliography}
\end{document}